\def\RR{\mathbbm{R}}
\def\1{\mathbf{1}}
\def\0{\mathbf{0}}
\def\st{\textrm{subject to }}
\def\p{\mathbf{p}}
\def\t{\mathbf{t}}
\def\p{\mathbf{p}}
\def\q{\mathbf{q}}
\newcommand{\ket}[1]{| #1 \rangle}
\newcommand{\mean}[1]{\left\langle #1 \right\rangle}
\newtheorem{prop}{Proposition}
\newtheorem{corollary}[prop]{Corollary}
\newtheorem{theorem}[prop]{Theorem}
\renewcommand{\rho}{\varrho}
\newcommand{\processnext}[1]{%
  \ifx\listfinish#1\empty\else\listact{#1}\expandafter\processnext\fi}
\begin{document}
\title{Quantifying Bell non-locality with the trace distance}
\date{\today}

\author{S. G. A. Brito}
\affiliation{International Institute of Physics, Federal University of Rio Grande do Norte, 59078-970, P. O. Box 1613, Natal, Brazil}
\author{B. Amaral}
\affiliation{International Institute of Physics, Federal University of Rio Grande do Norte, 59078-970, P. O. Box 1613, Natal, Brazil}
\affiliation{Departamento de F\'isica e Matem\'atica, CAP - Universidade Federal de S\~ao Jo\~ao del-Rei, 36.420-000, Ouro Branco, MG, Brazil} 
\author{R. Chaves}
\affiliation{International Institute of Physics, Federal University of Rio Grande do Norte, 59078-970, P. O. Box 1613, Natal, Brazil}

\begin{abstract}
Measurements performed on distant parts of an entangled quantum state can generate correlations incompatible with classical theories respecting the assumption of local causality. This is the phenomenon known as quantum non-locality that, apart from its fundamental role, can also be put to practical use in applications such as cryptography and distributed computing. Clearly, developing ways of quantifying non-locality is an important primitive in this scenario. Here, we propose to quantify the non-locality of a given probability distribution via its trace distance to the set of classical correlations. We show that this measure is a monotone under the free operations of a resource theory and that furthermore can be computed efficiently with a linear program. We put our framework to use in a variety of relevant Bell scenarios also comparing the trace distance to other standard measures in the literature.

\end{abstract}

\maketitle

\section{Introduction} 
With the establishment of quantum information science, the often called counter-intuitive features of quantum mechanics such as entanglement \cite{Horodecki2009} and non-locality \cite{Brunner2014} have been raised to the status of a physical resource that can be used to enhance our computational and information processing capabilities in a variety of applications. To that aim, it is of utmost importance to devise a resource theory to such quantities, not only allowing for operational interpretations but as well for the precise quantification of resources. Given its ubiquitous importance, the resource theory of entanglement \cite{Horodecki2009} is arguably the most well understood and vastly explored and for this reason has become the paradigmatic model for further developments \cite{Gour2008,Brandao2013,Vicente2014,Winter2016,Coecke2016,Abramsky2017,Amaral2017}.

As discovered by John Bell \cite{Bell1964}, one of the consequences of entanglement is the existence of quantum non-local correlations, that is, correlations obtained by local measurements on distant parts of a quantum system that are incompatible with local hidden variable (LHV) models. In spite of their close connection, it has been realized that entanglement and non-locality refer to truly different resources \cite{Brunner2014}, the most striking demonstration given by the fact that there are entangled states that can only give rise to local correlations \cite{Werner1989}. In view of that and the wide applications of Bell's theorem in quantum information processing, several ways of quantifying non-locality have been proposed \cite{Brunner2014,Popescu1994,Eberhard1993,Toner03,Pironio2003,van2005,Acin2005,Junge2010,Hall2011,Chaves2012,Vicente2014,Fonseca2015,Chaves2015b,Ringbauer2016,Montina2016,Brask2017,GA17}. However, only recently a proper resource theory of non-locality has been developed \cite{Vicente2014,GA17} thus allowing for a formal proof that previously introduced quantities indeed provide good measures of non-local behavior. Importantly, different measures will have different operational meanings and do not necessarily have to agree on the ordering for the amount of non-locality. For instance, a natural way to quantify non-locality is the maximum violation of a Bell inequality allowed by a given quantum state. However, we might also be interested in quantifying the non-locality of a state by the amount of noise (e.g., detection inefficiencies) it can stand before becoming local. Interestingly, these two measures can be inversely related as demonstrated by the fact that in the CHSH scenario \cite{Clauser1969} the resistance against detection inefficiency increases as we decrease the entanglement of the state \cite{Eberhard1993} (also reducing the violation the of CHSH inequality).

Bell's theorem is a statement about the incompatibility of  probabilities predicted by quantum mechanics with those allowed by classical theories. Thus, it seems natural to quantify non-locality using standard measures for the distinguishability between probability distributions, the paradigmatic example being the trace distance. Apart from being a valid distance in the space of probabilities, it also has a clear operational interpretation \cite{Nielsen2010}. However, and somehow surprisingly, apart from the exploratory results in \cite{Bernhard2014}, to our knowledge an in-depth analysis of the trace distance as a quantifier of non-locality has never been presented before.

That is precisely the gap we aim to fill in this paper. In  Sec. \ref{sec:scenario} we describe the scenario of interest and  propose a novel quantifier for non-locality based on the trace distance. Further, in Sec. \ref{sec:LPformulation} we show how our measure can be evaluated efficiently via a linear program and in Sec. \ref{sec:measure} we show that it is a valid quantifier by employing the resource theory presented in \cite{Vicente2014,GA17}. We then apply our framework for a variety of Bell scenarios in Sec. \ref{sec:applications}, including bipartite as well as multipartite ones. In Sec. \ref{sec:relation} we discuss the relation of the trace distance with other measures of non-locality. Finally, in Sec. \ref{sec:discussion} we discuss our findings and point out possible venues for future research.

\section{Scenario}
\label{sec:scenario}
We are interested in the usual Bell scenario setup where a number of distant parts perform different measurements on their shares of a joint physical system. Without loss of generality, here we will restrict our attention to a bipartite scenario (with straightforward generalizations to more parts, see Sec. \ref{sec:applications}) where two parts, Alice and Bob, perform measurements labeled by the random variables $X$ and $Y$ obtaining measurement outcomes described by the variables $A$ and $B$, respectively (see Fig. \ref{fig:bipartite}). 

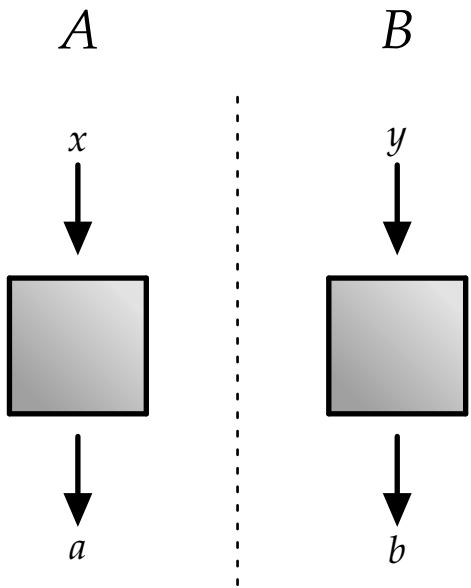
\begin{figure}[h!]
\definecolor{ududff}{rgb}{0.30196078431372547,0.30196078431372547,1}
\definecolor{zzttqq}{rgb}{0,0,0}
\definecolor{aqaqaq}{rgb}{0.6274509803921569,0.6274509803921569,0.6274509803921569}\begin{tikzpicture}[scale=0.3,line cap=round,line join=round,>=triangle 45,x=1cm,y=1cm]
\fill[fill=aqaqaq,fill opacity=1, shading = rectangle, left color=aqaqaq, right color=aqaqaq!30!white, shading angle=135] (-4,8) -- (2,8) -- (2,14) -- (-4,14) -- cycle;
\fill[fill=aqaqaq,fill opacity=1, shading = rectangle, left color=aqaqaq, right color=aqaqaq!30!white, shading angle=135] (10,8) -- (16,8) -- (16,14) -- (10,14) -- cycle;
\draw [line width=2pt,color=zzttqq] (-4,8)-- (2,8);\draw [line width=2pt,color=zzttqq] (2,8)-- (2,14);\draw [line width=2pt,color=zzttqq] (2,14)-- (-4,14);\draw [line width=2pt,color=zzttqq] (-4,14)-- (-4,8);\draw [line width=2pt,color=zzttqq] (10,8)-- (16,8);\draw [line width=2pt,color=zzttqq] (16,8)-- (16,14);\draw [line width=2pt,color=zzttqq] (16,14)-- (10,14);\draw [line width=2pt,color=zzttqq] (10,14)-- (10,8);\draw [->,line width=2pt] (-1,19) -- (-1,15);\draw [->,line width=2pt] (13,19) -- (13,15);\draw [->,line width=2pt] (-1,7) -- (-1,3);\draw [->,line width=2pt] (13,7) -- (13,3);\draw [line width=1pt,dash pattern=on 2pt off 5pt] (6,22)-- (6,0);
\begin{Large}
\draw[color=black] (-1,20) node {$x$};
\draw[color=black] (13,20) node {$y$};
\draw[color=black] (-1,2) node {$a$};
\draw[color=black] (13,2) node {$b$};
\end{Large}
\begin{huge}
\draw[color=black] (-1,25) node {$A$};
\draw[color=black] (13,25) node {$B$};
\end{huge}\end{tikzpicture}
\caption{Bipartite Bell scenario where two parts, Alice  and Bob, share a pair of correlated 
measurement devices, with inputs labeled by $x$ and $y$ and outputs labeled by $a$ and $b$, respectively.}
\label{fig:bipartite}
\end{figure}

A central goal in the study of Bell scenarios is the characterization of what are the distributions $p(a,b \vert x,y)$ compatible with a classical description based on the assumption of local realism implying that
\begin{equation}
\label{LHV}
p_{\mathrm{C}}(a,b \vert x,y)= \sum_{\lambda} p(\lambda) p(a \vert x,\lambda) p(b \vert y,\lambda).
\end{equation}
All the correlations between Alice and Bob are assumed to be mediated by common hidden variable $\lambda$ that thus suffices to compute the probabilities of each of the outcomes, that is, $p(a\vert x,y,b,\lambda)=p(a\vert x,\lambda)$ (and similarly for $b$).

The central realization of Bell's theorem \cite{Bell1964} is the fact that there are quantum correlations obtained by local measurements ($M^x_a$ and $M^y_b$) on distant parts of a joint entangled state $\rho$, that according to quantum theory are described as
\begin{equation}
\label{quantum}
p_{\mathrm{Q}}(a,b \vert x,y)= \mathrm{Tr} \left[ \left(M^x_a \otimes M^y_b \right) \rho \right],
\end{equation}
and cannot be decomposed in the LHV form \eqref{LHV}. Moreover, even more general set of correlations, beyond those achievable by quantum theory and called non-signalling (NS) correlations, can be defined. NS correlations are defined by the linear constraints
\begin{eqnarray}
\label{NS}
& & p(a \vert x) = \sum_{b} p(a,b \vert x,y)= \sum_{b} p(a,b \vert x,y^{\prime}) \\ \nonumber
& & p(b \vert y) = \sum_{a} p(a,b \vert x,y)= \sum_{a} p(a,b \vert x^{\prime},y),
\end{eqnarray}
that is, as expected from their spatial distance, the outcome of a given part is independent of the measurement choice of the other.
The set of classical correlations $\mathcal{C}_{\mathrm{C}}$ (those compatible with \eqref{LHV}) is a strict subset of the quantum correlations $\mathcal{C}_{\mathrm{Q}}$ (compatible with \eqref{quantum}) that in turn is a strict subset of $\mathcal{C}_{\mathrm{NS}}$ (compatible with \eqref{NS}). 

Suppose we are given a probability distribution and want to test if it is non-local or not, that is, whether it admits a LHV decomposition \eqref{LHV}. The most general way of solving that is resorting to a linear program (LP) formulation. First notice that we can represent a probability distribution $q(a,b \vert x,y)$ as a vector $\q$ with a number of components given by $n=\vert x\vert \vert y\vert \vert a\vert \vert b\vert$ ($\vert \cdot \vert$ representing the cardinality of the random variable). 
Thus, \eqref{LHV} can be written succinctly as $\p_{\mathrm{C}}=A \cdot \bm{\lambda}$, with $\bm{\lambda}$ being a vector with components $\lambda_i=p(\lambda=i)$ and A being a matrix indexed by $i$ and $j=(x,y,a,b)$ (a multi-index variable) with $A_{j,i}=\delta_{a,f_a(x,\lambda=i)}\delta_{b,f_b(y,\lambda=i)}$ (where $f_a$ and $f_b$ are deterministic functions). Thus, checking whether $ \q$ is local amounts to a simple feasibility problem that can be written as the following LP:
\begin{eqnarray}
\min_{\lambda \in \RR^m} & & \quad \quad \mathbf{v} \cdot \bm{\lambda}  \\ \nonumber
\st & &  \quad\q=A \cdot \bm{\lambda}  \\ \nonumber
& & \quad \lambda_i \geq 0 \\ \nonumber 
& & \quad \sum_{i}\lambda_i=1,
\end{eqnarray}
where $\mathbf{v}$ represents a arbitrary vector with the same dimension $m=\vert x\vert^{\vert a\vert } \vert y\vert^{\vert b\vert }$ as the vector representing the hidden variable $\bm{\lambda}$.

The measure we propose to quantify the degree of non-locality is based on the trace distance between two probability distributions $\q=q(x)$ and $\p=p(x)$:
\begin{equation}
D(\q,\p)=\frac{1}{2}\sum_{x} \vert q(x) -p(x) \vert. 
\end{equation}
The trace distance is a metric on the space of probabilities since it is symmetric and respect the triangle inequality. Furthermore, it has a clear operational meaning since
\begin{equation}
D(\q,\p)=\max_{S} \vert q(S)-p(S) \vert,
\end{equation}
where the maximization is performed over all subsets $S$ of the index set $\left\{ x\right\}$ \cite{Nielsen2010}. That is, $D(\q,\p)$ specifies how well the distributions can be distinguished if the optimal event $S$ is taken into account. Consider for instance distributions $p(x)$ and $q(x)$ for a variable $X$ assuming $d$ values as $x=1,\dots,d$ and such that $p(x)=\delta_{x,1}$ and $q(x)=\frac{1}{d-1}(1-\delta_{x,1})$. In this case $D(\q,\p)=1$, also meaning that $\q$ and $\p$ can be perfectly distinguished in a single shot since if we observe $x=1$ we can be sure to have $p(x)$ (or $q(x)$ otherwise).

In our case we are interest in quantifying the distance between the probability distribution generated out of a Bell experiment and the closest classical probability (compatible with \eqref{LHV}). We are then interested in the trace distance between $q(a,b,x,y)=q(a,b \vert x,y) p(x,y)$ and $p(a,b,x,y)=p(a,b \vert x,y) p(x,y)$, where $p(x,y)$ is the probability of the inputs and that we choose to fix as the uniform distribution, that is, $p(x,y)=\frac{1}{\vert x \vert \vert y \vert}$. In principle, one could also optimize over $p(x,y)$ and we will do so in Sec. \ref{sec:applications}. However, considering that the measurement choices are totally random and identically distributed is a canonical choice in a Bell experiment.

We are now ready to finally introduce our measure $\mathrm{NL}(\q) $ for the non-locality of distribution $\q=q(a,b \vert x,y)$ that is given by
\begin{eqnarray}
\label{NLtrace}
\mathrm{NL}(\q) & & =\frac{1}{\vert x \vert \vert y \vert } \min_{\p \in \mathcal{C}_{\mathrm{C}}} \quad D(\q,\p)\\ \nonumber
& & =\frac{1}{2\vert x \vert \vert y \vert } \min_{\p \in \mathcal{C}_{\mathrm{C}}} \sum_{a,b,x,y} \vert q(a,b \vert x,y) - p(a,b\vert x,y) \vert.
\end{eqnarray}
This is the minimum trace distance between the distribution under test and the set of local correlations. Geometrically it can be understood (see Fig. \ref{fig:dist}) as how far we are from the local polytope defining the correlations \eqref{LHV}. Therefore, the more we violate a Bell inequality the higher it will be its value. However, as will be further discussed in Sec. \ref{sec:relation}, the violation of a given Bell inequality will in general only provide a lower bound to its value.

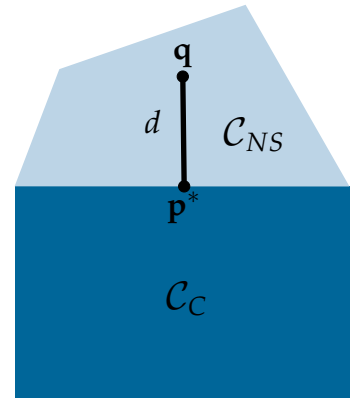
\begin{figure}[h!]
\definecolor{xdxdff}{rgb}{0.49019607843137253,0.49019607843137253,1}
\definecolor{ududff}{rgb}{0.30196078431372547,0.30196078431372547,1}
\definecolor{bcduew}{rgb}{0.7372549019607844,0.8313725490196079,0.9019607843137255}
\definecolor{qqwwzz}{rgb}{0,0.4,0.6}
\begin{tikzpicture}[scale=0.7 ,line cap=round,line join=round,>=triangle 45,x=1cm,y=1cm]
\fill[line width=2pt,color=qqwwzz,fill=qqwwzz,fill opacity=1] (5,1) -- (5,-3) -- (11.292061789564135,-3) -- (11.292061789564135,1) -- cycle;
\fill[line width=2pt,color=bcduew,fill=bcduew,fill opacity=1] (5,1) -- (11.292061789564135,0.9971334792922153) -- (9.34,4.44) -- (5.84,3.22) -- cycle;
\draw [line width=2pt] (8.16,3.08)-- (8.185723211547879,0.9985486567264298);
\begin{large}
\draw [fill=black] (8.16,3.08) circle (3pt);
\draw[color=black] (8.16,3.48) node {$\q$};
\draw [fill=black] (8.185723211547879,0.9985486567264298) circle (3pt);
\draw[color=black] (8.185723211547879,0.5985486567264298) node {$\p^*$};
\draw[color=black] (7.6,2.22702504330775) node {$d$};
\end{large}
\begin{Large}
\draw[color=black] (8.235821423467513,-1.1283849067653924) node {$\mathcal{C}_C$};
\draw[color=black] (9.503240193955191,1.9751661850698143)  node {$\mathcal{C}_{NS}$};
\end{Large}\end{tikzpicture}
\caption{Schematic drawing of a correlation $\q \in \mathcal{C}_{NS}$ and $d=\mathrm{NL}\left(\q\right)$,
the distance (with respect to the $\ell_1$ norm) from $\q$ to the closest local correlation $\p^* \in \mathcal{C}_C$.}
\label{fig:dist}
\end{figure}

\section{Linear program formulation}
\label{sec:LPformulation}

Given a distribution $\q=q(a,b \vert x,y)$ of interest, in order to compute $\mathrm{NL}(\q)$ we have to solve the following optimization problem
\begin{eqnarray}
\min_{\lambda \in \RR^m} & & \quad \| \q - A \cdot \lambda \|_{\ell_1} \\ \nonumber
\st & & \quad \lambda \geq 0 \\ \nonumber 
 & & \quad \sum_{i}\lambda_i=1.
\end{eqnarray}
First notice that the $\ell_1$ norm of a vector $\p$ ($\p \in \RR^n$, with components $p_i$)
\begin{equation}
\|  \p \|_{\ell_1} = \sum_{i} \vert p_i\vert,
\end{equation}
can be written as a minimization problem in the form
 \begin{eqnarray}
\|  \p \|_{\ell_1} = \min_{\t \in \RR^n} & & \quad \langle \1_n, \t \rangle   \label{L1_LP}\\ \nonumber
\st & & \quad -\t \leq \p \leq \t.  
\end{eqnarray}

Using that, we can rewrite our problem as a linear program
\begin{eqnarray}
\label{NLLP1}
\min_{\t \in \RR^n, \lambda \in \RR^m} & & \langle \1_n, \t \rangle   \\ \nonumber
\st & & \quad -\t \leq \q - A \cdot \lambda \leq \t \\ \nonumber
 & & \quad \sum_{i}\lambda_i=1 \\ \nonumber
 & & \quad \lambda \geq 0, \\ \nonumber
\end{eqnarray}
where $\q$ is a known vector of probability distribution to which we want to quantify the non-locality. This way, given an arbitrary distribution of interest we can compute, in an efficient manner, $\mathrm{NL}(\q)$. 

Alternatively, we might be interested not on the full distribution but simply on a linear function of it, for example, the violation of a given Bell inequality in the form $\mathrm{I}_{\mathrm{Bell}} \cdot \q = c $. In this case, further linear constraints need to be added to the LP such as normalization and the fact that the distribution is non-signalling (see Sec. \ref{sec:applications} for examples):
\begin{eqnarray}
\label{NLLP2}
\min_{\t \in \RR^n, \lambda \in \RR^m, \q \in \RR^n} & & \langle \1_n, \t \rangle   \\ \nonumber
\st & & \quad -\t \leq \q - A \cdot \lambda \leq \t \\ \nonumber
 & & \quad \sum_{i}\lambda_i=1 \\ \nonumber
  & & \quad \mathrm{I}_{\mathrm{Bell}}\cdot \q = c \\ \nonumber
 & & \quad \sum_{a,b}\q(ab|xy) = 1 \\ \nonumber
 & & \quad \sum_{a}\q(ab|xy) -\sum_{a}\q(ab|x'y)= 0 \;\;\forall\, (b,y)\\ \nonumber
 & & \quad \sum_{b}\q(ab|xy) -\sum_{b}\q(ab|xy')= 0 \;\;\forall \,(a,x)\\ \nonumber
 & & \quad \lambda \geq 0 \\ \nonumber
 & &\quad \q \geq 0. \\ \nonumber  
\end{eqnarray}
Notice that, instead of adding only NS constraints, one could also be interested in imposing quantum constraints \cite{Navascues2007}. However, in this case we would need to resort to a semi-definite program (that asymptomatically converges to the quantum) instead of a linear program.

Finally, often we might be interested in having an analytical rather than numerical tool. To that aim we can rely on the dual of the LP \eqref{NLLP1} and \eqref{NLLP2}. We refer the reader to \cite{Chaves2015b} for a very detailed account of the dualization procedure but, in short, the optimum solution of \eqref{NLLP1} and \eqref{NLLP2} is achieved in one of the extremal points of the convex set defined by the dual constraints. This way, being able to compute such extremal points $\mathrm{v}_i$, we have an analytical solution, valid for arbitrary test distributions $\q$, given by
\begin{equation}
\mathrm{NL}(\q)=\max_{\mathrm{v}_i} \q \cdot \mathrm{v}_i.
\end{equation}

\section{Proving that trace distance is a non-locality measure}
\label{sec:measure}

A resource theory provides a powerful framework for the formal treatment of a physical property as a resource, enabling its characterization, quantification and manipulation \cite{Gour2008, Brandao2013, Coecke2016}. Such a resource theory  consists  in  three main ingredients:  a set of objects, specifying the physical property that may  serve as a resource, and a characterization of the set of \emph{free objects}, which are the ones that do not contain the resource;  a set of \emph{free operations}, that map every free  object into a free object; resource quantifiers that provide a quantitative characterization of the \emph{amount} of resource  a given object contain. 

One of the essential requirements for a  resource quantifier is that it must be monotonous under free operations, that is, the quantifier 
must not increase when a free operation is applied. Hence, to define proper quantifiers, one needs first to establish
the set of free operations that will be considered, which can vary depending on the applications or the physical constraints under consideration.  

In a resource theory of non-locality, the set of objects is the set of non-signaling correlations $\mathcal{C}_{\mathrm{NC}}$, and the set of free objects is the set $\mathcal{C}_{\mathrm{C}}$ of local correlations. A detailed discussion of several physically relevant free operations for non-locality can be found in Refs. \cite{Vicente2014, GA17}. In what follows we sketch the proof that that our measure $\mathrm{NL}(\q)$ is a monotone for a resource theory of non-locality defined by a wide class of free operations. The detailed proof the results below can be found in the Appendix \ref{sec:proofs}.

The first free operation we consider is relabeling $\mathcal{R}$ of inputs and outputs, defined by a permutation of the set of inputs $x$ and $y$, and outputs $a$ and $b$. This operation corresponds to a permutation of the entries of the correlation vectors $\q$, and hence we have that 
\begin{equation}
\mathrm{NL}\left(\mathcal{R}\left(\q\right)\right) = \mathrm{NL}\left(\q\right).
\end{equation}

Another natural free operation is to take convex sums between a distribution $\q$ and a local distribution $\p \in \mathcal{C}_{\mathrm{C}}$.
In this case, the triangular inequality for the $\ell_1$ norm implies that, if  $\pi \in [0,1]$,
\begin{equation}
\mathrm{NL}\left(\pi \q + \left(1-\pi\right) \p\right) \leq \pi \mathrm{NL}\left( \q \right).
\end{equation}
Combining monotonicity under relabellings and convexity of the $\ell_1$ norm one can show that $\mathrm{NL}$ is monotonous under convex combinations of relabeling operations.

Next, we sketch the proof of the monotonicity of $\mathrm{NL}$ under more sophisticated free operations, namely post-processing and pre-processing operations. Given $\q$, we define a \emph{post-processing operation} as one that transforms $\q$ into $\mathcal{O}\left(\q\right)$,
where
\begin{equation}
\label{eq:post}
\mathcal{O}\left(\q\right) \left(\alpha, \beta \left| x, y\right.\right)=  \sum_{a,b} O^L\left(\alpha, \beta \left|
a,b,x,y\right.\right) \times q\left(a,b\left|x,y\right.\right),
\end{equation}
and $O^L$ is a $x,y$-dependent correlation with inputs $a,b$ and outputs $\alpha, \beta$ that satisfies 
\begin{equation}
O^L\left(\alpha, \beta \left|
a,b,x,y\right.\right) =  \sum_{\lambda} p\left(\lambda\right) O_A^L\left(\alpha \left|
a,x\right.\right) \times O_B^L\left(\beta \left|
b,y\right.\right).
\end{equation}
As shown in Refs. \cite{LVN14,Amaral2017}, output operations preserve the set  of local distributions: if $\q \in \mathcal{C}_{\mathrm{C}}$, then $\mathcal{O}\left(\q\right)\in \mathcal{C}_{\mathrm{C}}$. As a consequence of  convexity of the $\ell_1$ norm, one can prove that 
if $\mathcal{O}$ is an post-processing operation, then 
\begin{equation}
\mathrm{NL}\left(\mathcal{O}\left(\q\right)\right)\leq \mathrm{NL}\left(\q\right).
\end{equation}

Regarding pre-processing operations, it is possible to show that $\mathrm{NL}$ is monotonous under \emph{uncorrelated input enlarging}
operations defined in \cite{Vicente2014}. More generally, one can define a pre-processing operation  that transforms $\q$ into $\mathcal{I}\left(\q\right)$,
where
\begin{equation}
\mathcal{I}\left(\q\right)\left(a,b\left|\chi, \psi\right.\right)= \sum_{x,y} q\left(a,b\left|x,y\right.\right)
I^L\left(x,y\left|\chi , \psi\right. \right),
\label{eq:def_pre}
\end{equation}
and $I^L$ is a local correlation with inputs $\chi, \psi$ and outputs $x, y$.
It was also shown in Refs. \cite{LVN14, Amaral2017} that pre-processing operations preserve the set  of local distributions: if $\q \in \mathcal{C}_{\mathrm{C}}$, then $\mathcal{I}\left(\q\right)\in \mathcal{C}_{\mathrm{C}}$. In what follows, we will consider the 
restricted class of pre-processing operations that satisfy $|x|=\left|\chi\right|$, $|y|=\left|\psi\right|$ and 
\begin{equation}
\sum_{\chi, \psi} I^L\left(x,y\left|\chi , \psi\right. \right) \leq 1.
\label{eq:rest_pre}
\end{equation}
Intuitively, these restrictions forbid one to increase artificially the number of inputs of the scenario and  
\emph{correlated input enlarging} operations, defined in \cite{Vicente2014}. 
As a consequence of this restriction and  convexity of the $\ell_1$ norm, one can prove that 
if $\mathcal{I}$ is an output operation with $|x|=\left|\chi\right|$, $|y|=\left|\psi\right|$ satisfying Eq. \eqref{eq:rest_pre}, then 
\begin{equation}
\mathrm{NL}\left(\mathcal{I}\left(\q\right)\right)\leq \mathrm{NL}\left(\q\right).
\end{equation}


\section{Applications to various Bell scenarios}
\label{sec:applications}

\subsection{Bipartite}
We start considering the paradigmatic CHSH scenario \cite{Clauser1969}, where each of the two parts have two measurement settings with two outcomes each, that is, $x,y,a,b=0,1$. The only Bell inequality (up to symmetries) characterizing this scenario is the CHSH one, that using the notation in \cite{Collins2004} can be written as
\begin{equation}
\mathrm{CHSH}= q_{AB}^{0,0}+q_{AB}^{0,1}+q_{AB}^{1,0} -q_{AB}^{1,1}-q_{A}^{0}-q_{B}^{0} \leq 0,
\end{equation}
where in the inequality above we have used the short-hand notation $q_{A,B}^{x,y}=q(a=0,b=0 \vert x,y)$ and similarly to the other terms. Using the dualization procedure described in Sec. \ref{sec:LPformulation} and assuming $\q$ to be a non-signaling distribution (respecting \eqref{NS}), one can prove that
\begin{equation}
\label{NLchsh}
\mathrm{NL}(\q)=\frac{1}{2}\max \left[0,\Pi(\mathrm{CHSH}) \right],
\end{equation}
where $\Pi(\mathrm{CHSH})$ stand for all the 8 symmetries (under permutation of parts, inputs and outputs) of the CHSH inequality. As expected, in the CHSH scenario the CHSH inequality completely characterizes the trace distance of a given test distribution to the set of local correlations \cite{Masanes2006}. 

Moving beyond the CHSH scenario, we have also considered the CGLMP scenario \cite{Collins2002} where two parts perform two possible measurement with a number $d$ of outcomes. The CGLMP inequality can be succinctly written for any $d$ as:
\begin{eqnarray}
\mathrm{I}^{d}_{\mathrm{CGLMP}} = \frac{1}{4}\sum_{k=0}^{[d/2]-1}\left(1-\frac{2k}{d-1}\right)\nonumber \\
\left[p(a = b + k|00)-p(a=b-k-1|00)\right. +  \nonumber \\
 p(a+k=b|01) - p(a - k - 1= b|01) +\nonumber \\
p(a + k+1 = b|10)- p(a-k=b|10)+  \nonumber \\
p(a=b+k|11)-p(a=b-k-1|11)\left.\right] -\frac{1}{2}\leq 0,
\end{eqnarray}
where $[d/2]$ means the integer part of it and $p(a=b+k|xy) \equiv \sum_{j=0}^{d-1} p(a = j,b = j + k \mod d|xy)$. The maximum value of $\mathrm{I}^{d}_{\mathrm{CGLMP}}$ is $1/2$ and the maximum value for the local variable theories is $0$ $\forall d$ \footnote{We have normalized the inequality in order to obtain the local bound equal $0$}.
In this case we have solved for $d=2,\dots,5$ a LP where instead of fixing the test distribution $q(a,b \vert x,y)$ we only fix the value of the inequality $\mathrm{I}^{d}_{\mathrm{CGLMP}}$ and also impose non-signaling constraints \eqref{NS} over it (see eq. \eqref{NLLP2}). Similarly, to the CHSH case we obtain the same expression given by
\begin{equation}
\mathrm{NL}(\q)=\frac{1}{2}\max \left[0,\mathrm{I}^{d}_{\mathrm{CGLMP}} \right],
\end{equation}
and that we conjecture to hold true to any $d$. Given that the quantum violation of the CGLMP inequality increases with $d$ \cite{Collins2002} it follows that the maximum quantum value of $\mathrm{NL}(\q)$ will follow a similar trend as shown in Fig. \ref{fig:CGLMPplot}.

\begin{figure}[!h]
\centering
\includegraphics[scale=.3]{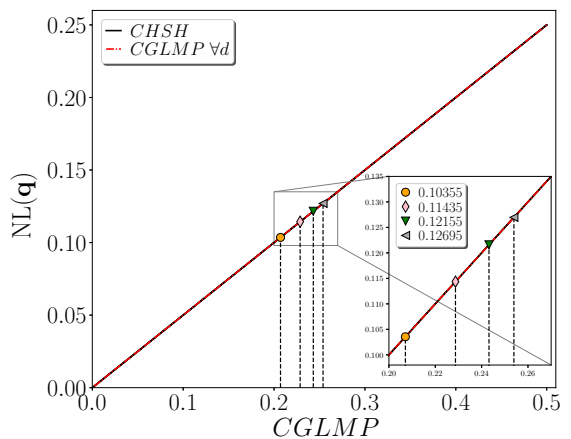}
\caption{Value of $\mathrm{NL}(\q)$ as a function of the value of the CGLMP inequality $\mathrm{I}^{d}_{\mathrm{CGLMP}}$  for $d=2,3,4,5$. The inset shows the maximum known quantum violation for each of these values of $d$ that in turn implies that $\mathrm{NL}(\q)=0.1035$ (d=2), $\mathrm{NL}(\q)=0.1143$ (d=3), $\mathrm{NL}(\q)=0.1215$ (d=4) and $\mathrm{NL}(\q)=0.1269$ (d=5).}
\label{fig:CGLMPplot}
\end{figure}

Finally, another scenario we have considered is the one introduced in \cite{Collins2004}, where each of the two parts measure a number $n$ of observables with 2 outcomes each. We analyze the $\mathrm{I}_{nn22}$ inequality that has the form \cite{Collins2004}
\begin{equation}
\mathrm{I}_{nn22} \leq 0,
\end{equation}
where, for example, for $n=3$ we have
\begin{eqnarray}
\mathrm{I}_{3322}= & &
q_{AB}^{0,0}+q_{AB}^{0,1}+q_{AB}^{0,2}
+q_{AB}^{1,0}+q_{AB}^{1,1}-q_{AB}^{1,2}
\\ \nonumber & & +q_{AB}^{2,0}-q_{AB}^{2,1}
-2q_{A}^{0}-q_{A}^{1}-q_{B}^{0}. 
\label{eq:I3322}
\end{eqnarray}
The maximum non-signaling violation of these inequalities grow linearly with the number of settings $n$ as $\mathrm{I}^{\mathrm{max}}_{nn22}=(n-1)/2$.

Again, by fixing the value of the inequality and imposing the non-signalling constraints we have obtained the corresponding value of $\mathrm{NL}(\q)$, with the results shown in Fig. \ref{fig:Inn22}. Interestingly, even achieving the maximal non-signaling violation of the $\mathrm{I}_{nn22}$ we obtain a value for $\mathrm{NL}(\q)$ that decreases with the number of settings $n$. Therefore, when restricted to this class of inequalities, the best situation is already achieved with $n=2$ (the CHSH scenario). Furthermore this illustrates well the fact that different measures of non-locality do not coincide in general: even though the violation of a Bell inequality can grow with number of setting considered, the trace distance of the corresponding distribution might decrease.

\begin{figure}[h!]
\centering
\includegraphics[scale=.3]{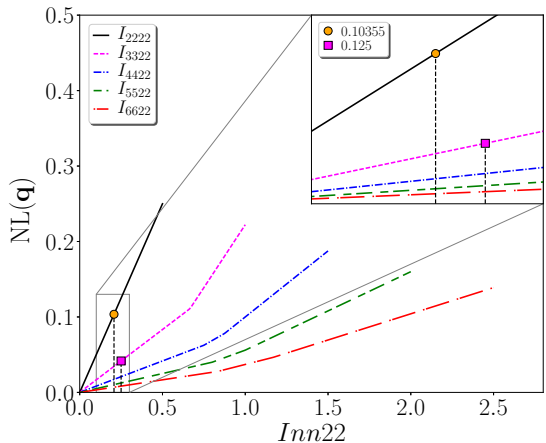}
\caption{Value of $\mathrm{NL}(\q)$ as a function of the value of the  inequality $\mathrm{I}_{nn22}$  for $n=2,3,4,5,6$. The inset shows the maximum known quantum violation (achieved with qubits) for $n=2,3$. Interestingly, even if the maximum NS violation of the inequality increases with the number of settings $n$, its trace distance to the set of local correlations decreases.}
\label{fig:Inn22}
\end{figure}

\subsection{Tripartite}
In the tripartite scenario, considering that each of the parts measure two dichotomic observables, all the different classes of Bell inequalities have been classified \cite{Sliwa2003}. There are 46 of them, under the name of Sliwa inequalities.

Following a similar approach to the CGLMP and $\mathrm{I}_{nn22}$ discussed above, we have computed the value of $\mathrm{NL}(\q)$ as a function of the various Sliwa inequalities. The result in shown in Fig. \ref{fig:sliwa} together with the values associated to the maximum violation of these inequalities. Regarding quantum violations we have used the probability distributions presented in \cite{Rosa2016} where the maximum quantum violation of the Sliwa inequalities has been considered. The results are shown in Fig. \ref{fig:sliwa}b, where we can see that the optimum quantum value of $\mathrm{NL}(\q)=1/8$, higher than the one obtained for the maximum quantum violation of CHSH but smaller than the one for CGLPM already for $d=5$. The maximum non-signalling violation of these inequalities leads to $\mathrm{NL}(\q)=0.25$, the same value obtained for the CGLMP inequality (any $d$) in the bipartite case.

\begin{figure}[tb]
\centering
\includegraphics[scale=.3]{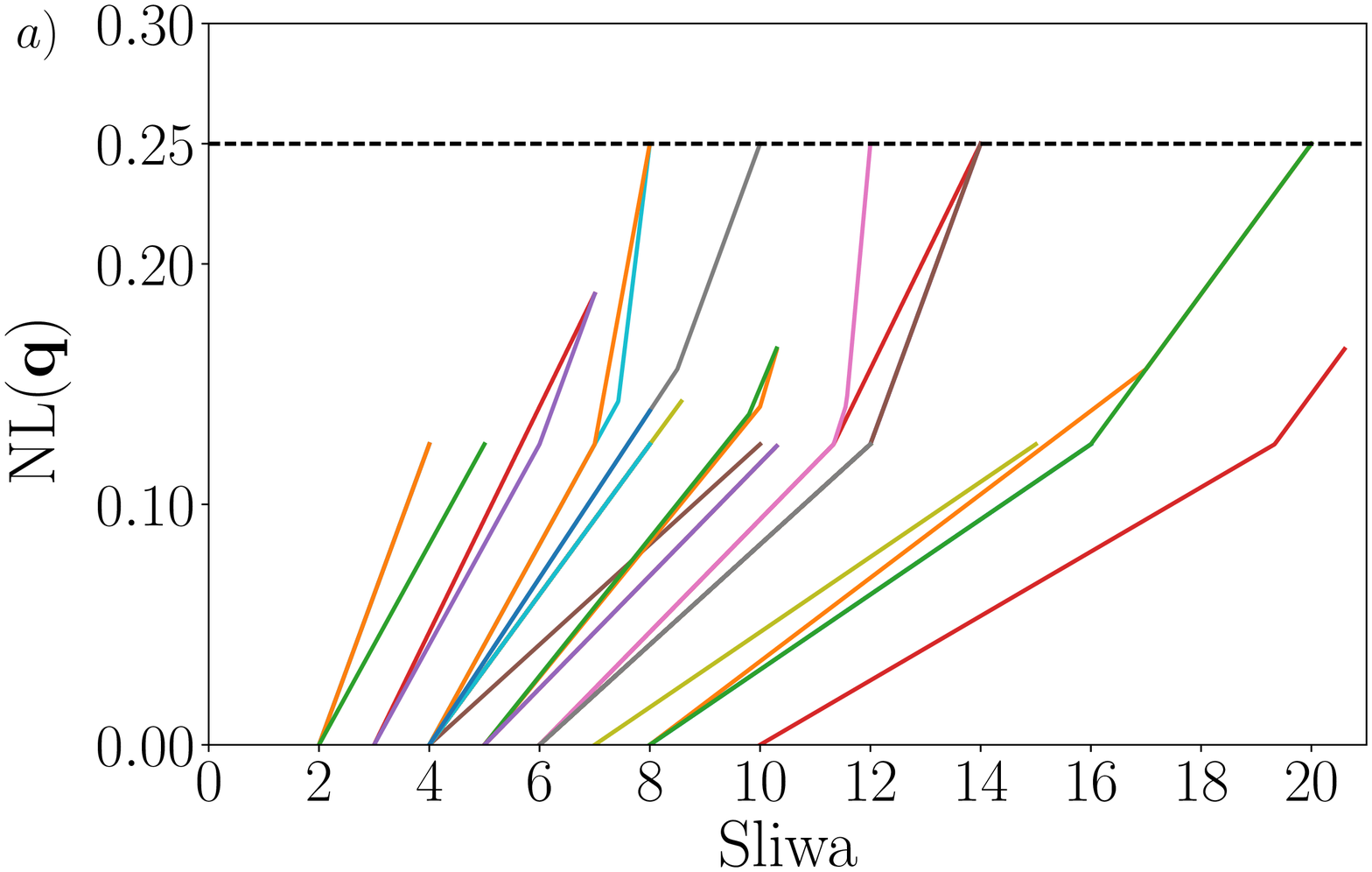}
\includegraphics[scale=.3]{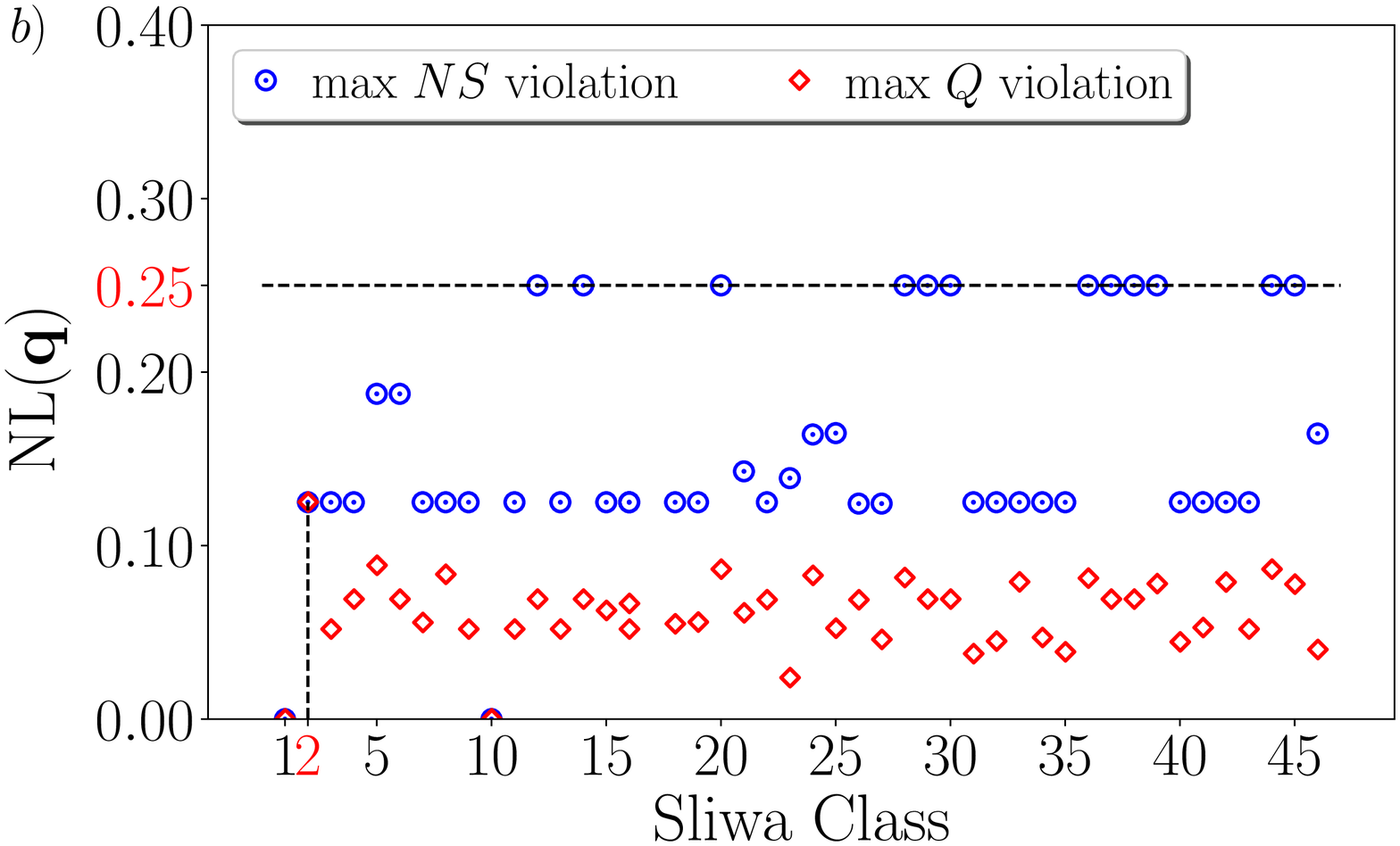}
\caption{a) Values of $\mathrm{NL}(\q)$ as a function of the violation of the $46$ Sliwa classes of inequalities. b) The values of $\mathrm{NL}(\q)$ for the maximal quantum \cite{Rosa2016} and nonsignaling violation of each one of these inequalities. The maximum value of $\mathrm{NL}(\q)=1/8$ is achieved by the maximum violation of the Mermin inequality \cite{Mermin1990} (corresponding to the 2nd Sliwa inequality \cite{Sliwa2003}).}
\label{fig:sliwa}
\end{figure}

\subsection{Mermin inequality for more parts}
The results presented above show that considering the paradigmatic and Sliwa \cite{Sliwa2003} inequalities, the best values we could achieve for our quantifier were $\mathrm{NL}(\q)=1/4$ in the case of non-signalling correlations and $\mathrm{NL}(\q)=1/8$ for quantum correlations. As we show next, these values can be improved analyzing multipartite scenarios beyond 3 parts, more precisely considering the generalization of the Mermin inequality
\cite{Mermin1990,Aderhali1992,Belinskii1993,Belinskii1997} in its form given by \cite{Gisin1998}
\begin{equation}
\langle \mathrm{M}_N \rangle \leq 1,
\end{equation}
that is defined recursively starting with $\mathrm{M}_1=A_1$ by
\begin{equation}
\mathrm{M}_i=\frac{\mathrm{M}_{i-1}}{2}(A_i+\bar{A}_i)+\frac{\bar{\mathrm{M}}_{i-1}}{2}(A_i-\bar{A}_i),
\end{equation}
where $\bar{\mathrm{M}}_{i-1}$ is obtained from $\mathrm{M}_{i-1}$ by exchanging all the observables $A\leftrightarrow \bar{A}$. By choosing suitable projective observables in the $X-Y$ plane of the Bloch sphere and GHZ states \cite{GHZ1989}, the maximum quantum violation is given by $\left\vert \left\langle \mathrm{M}_N\right\rangle \right\vert=2^\frac{N-1}{2}$. For $N$ odd this is also the algebraic/non-signalling maximum of the inequality. For N even the  algebraic/non-signalling maximum is given by $\left\vert \left\langle \mathrm{M}_N\right\rangle \right\vert=2^\frac{N}{2}$. Succinctly, the maximum NS is $\left\vert \left\langle \mathrm{M}_N\right\rangle \right\vert=2^{\lceil \frac{N-1}{2} \rceil}$. 

Following the same approach as before, we have computed the value of $\mathrm{NL}(\q)$ as a function of $M_N$. The results are shown in Fig. \ref{fig:mermin}. Interestingly, we see a clearly increase of $\mathrm{NL}(\q)$ as we consider the maximum violation of $M_N$ with increasing $N$. Notice, however, that going from $N$ even to $N+1$ seems to decrease the value of $\mathrm{NL}(\q)$. The reason for that comes from the fact that from the all possible $2^{N}$ measurement settings allowed by the scenario, only $2^{N-1}$ enter in the evaluation of the $M_N$. Since we are fixing the probability of the inputs to be identically distributed, that amounts to reduce $\mathrm{NL}(\q)$ by a factor of half. If instead, we now choose the probability of inputs to be $1/2^{N-1}$ for all those appearing in $M_N$ and zero otherwise, we then recover a monotonically increasing value of $\mathrm{NL}(\q)$ with $N$. In particular, notice that by doing that we achieve a value of $\mathrm{NL}(\q)=1/4$ for the maximum quantum violation of the Mermin inequality in the tripartite scenario.

\begin{figure}[!htb]
\centering
\includegraphics[scale=.3]{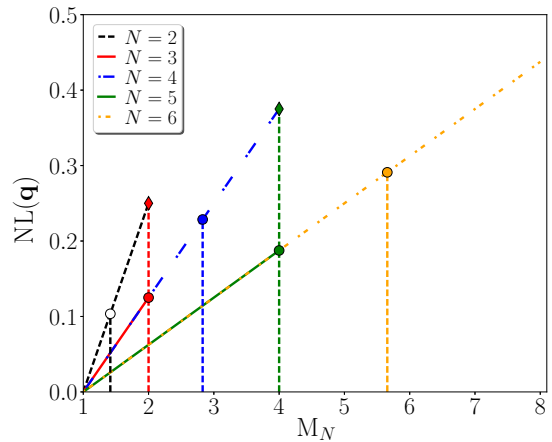}
\caption{$\mathrm{NL}(\q)$ as a function of the values of $\mathrm{M}_N$ for $N=2,3,4,5,6$. We have use the straight line for odd cases and dashed lines for even cases. The circles correspond to $\mathrm{NL}(\q)$ evaluated assuming that each possible measurement setting has a probability $1/2^N$. The diamond correspond to the case where only the measurement settings appearing in $\mathrm{M}_N$ have a probability different of zero (happening for the $N$ odd case).}
\label{fig:mermin}
\end{figure}

In this case, we can also provide an analytical construction, providing an upper bound for $\mathrm{NL}(\q)$, that perfectly coincides with the LP results with $N=2,\dots,6$ and that for this reason we conjecture provides the actual value of $\mathrm{NL}(\q)$ for any $N$. For simplicity in what follows we will restrict our attention to the case of $N$ even.

We choose a fixed distribution $\q=v\q_{max}+(1-v)\frac{1}{2^N}$, where $\q_{max}$ is the distribution given the maximum NS of the inequality. Notice that $M_N$ only contain full correlators and can be succinctly written as
\begin{equation}
\mathrm{M}_N=\frac{1}{2^N}\sum_{x_1,\dots,x_N=0,1} c_{x_1,\dots,x_N}\mean{A^{x_1}_1 \dots A^{x_N}_N}
\end{equation}
with $c_{x_1,\dots,x_N}=\pm 1$. This means that $q_{max}$ is such that $p(a_1,\dots,a_N \vert x_1,\dots,x_N )= 1/2^{N-1} $ if $a_1\oplus \dots \oplus a_N \oplus \delta_{-1,c_{x_1,\dots,x_N}}$ and $p(a_1,\dots,a_N \vert x_1,\dots,x_N)= 0 $ otherwise.

As the probability $\p$ entering in $\| \q - \p \|_{\ell_1}$ we choose
 $\p=v\p_{max}+(1-v)\frac{1}{2^N}$ where $\p_{max}$ is defined by
$p(a_1,\dots,a_N \vert x_1,\dots,x_N )= 1/2^{N-1} $ if $a_1\oplus \dots \oplus a_N \oplus \delta_{-1,c_{x_1,\dots,x_N}}$ (and $0$ otherwise).
We see that for every $c_{x_1,\dots,x_N}=+1$ it follows that $\| \q - \p \|_{\ell_1}=0$. For $c_{x_1,\dots,x_N}=-1$ we have $\| \q - \p \|_{\ell_1}=2v$. So, basically we have to count what is the number of positive and negative coefficients $c_{x_1,\dots,x_N}$ and multiply each of these by $1/2^N$ (assuming all the inputs are equally likely) and by the distance $0$ or $2v$. The number of elements with negative coefficients can be found by a recursive relation. Given that $M_{N}$ had $\alpha_N$ negative coefficients it is easy to see that $\alpha_{N+2}=2\alpha_N+2^{N-2}$ with the initial condition that $\alpha_{2}=1$. So our quantifier for $N$ even is given by
\begin{equation}
\mathrm{NL}(\q)=v(1/2^N)\alpha_{N}=\mathrm{M}_{N}(1/2^{3N/2})\alpha_{N}
\end{equation}
that tends to $\mathrm{NL}(\q)=1/2$ with $N \rightarrow \infty$ and $v=1$ ($\mathrm{M}_N=2^{N/2}$).

\section{Relation to other non-locality measures}
\label{sec:relation}
In the following we will compare the trace distance measure with 3 other standard measures of non-locality: the amount of violation of Bell inequalities, the EPR-2 decomposition \cite{Popescu1994} and the relative entropy \cite{Acin2005}.

\subsection{The violation of Bell inequalities}
The violation of Bell inequalities are a standard way of quantifying the degree of non-locality and in some cases can as well find  operational interpretations, for instance, as the probability of success in some distributed computation protocols \cite{Buhrman2010,Brukner2004}. Clearly, one can expect that the more a given distribution violates a Bell inequality the higher is the trace distance from the local set.

However, in general, the violation of a given Bell inequality only provides a lower bound to such trace distance as can be clearly seen in Fig. \ref{fig:I3322}. In this figure we consider the $\mathrm{I}_{3322}$ inequality (see \eqref{eq:I3322}). Using the LP formulation we have computed $\mathrm{NL}(\q)$ for a distribution of the form $\q= v\q_{\mathrm{max}}+(1-v)1/4$, that is, a mixture of the distribution maximally violating the $\mathrm{I}_{3322}$ with white noise rendering  $\mathrm{I}_{3322}=2v-1$. Clearly, simply imposing the value of the violation of the inequality only gives a non-tight lower bound to $\mathrm{NL}(\q)$. Moreover, we see that optimizing over $p(x,y)$ such that $p(x,y)=1/4 \quad \mathrm{iff} \quad x,y \neq 2$ (in such a way that we recover the CHSH inequality) gives us a higher value for $\mathrm{NL}(\q)$.

\begin{figure}[!htb]
\centering
\includegraphics[scale=.3]{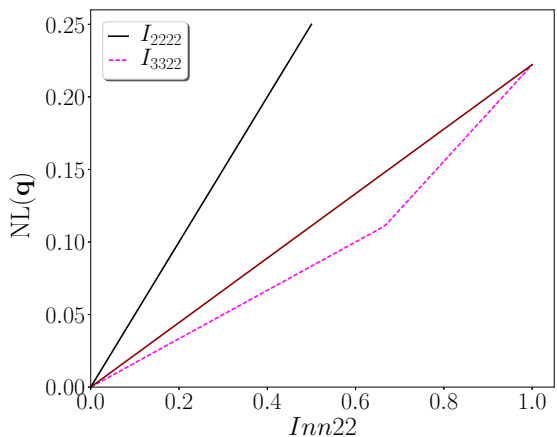}
\caption{The black line shows $\mathrm{NL}(\q)$ for the distribution $\q= v\q_{\mathrm{max}}+(1-v)1/4$ as a function of $\mathrm{I}_{3322}=2v-1$ considering that $p(x,y)=1/4 \quad \mathrm{iff} \quad x,y \neq 2$ (in which case we basically recover $I_{2222}$). The red solid line shows $\mathrm{NL}(\q)$ as a function of $\mathrm{I}_{3322}=2v-1$ considering the full distribution $\q$. Finally, the purple dashed line shows $\mathrm{NL}(\q)$ as a function of $\mathrm{I}_{3322}=2v-1$ considering only the value of the inequality (thus optimizing over all $\q$ compatible with it). Clearly, imposing the value of a given Bell inequality only provides a (in general non-tight) lower bound for $\mathrm{NL}(\q)$.}
\label{fig:I3322}
\end{figure}

\subsection{The EPR-2 decomposition}
Any probability distribution $\q=q(a,b \vert x,y)$ in a Bell experiment can be decomposed into convex mixture of a local part $q_{\mathrm{L}}$ and a non-local and non-signalling part $q_{\mathrm{NL}}$ as $q = (1-w_{\mathrm{NL}}) q_{\mathrm{L}} + w_{\mathrm{NL}}q_{\mathrm{NL}}$, with $0\leq w_{\mathrm{NL}}\leq1$.
The minimum non-local weight over all such decompositions,
\begin{eqnarray}
\label{wtilde}
\tilde{w}_{NL} (\q)&\doteq&\min_{q_{L},q_{NL}}w_{NL}.
\end{eqnarray}
defines the nonlocal content of $\q$, a natural quantifier of the non-locality in $\q$. Nicely, the violation of any Bell inequality $\mathrm{I}\leq\mathrm{I}^L$ (with $\mathrm{I}^L$  the local bound) yields a non-trivial lower bound to $\tilde{w}_{\mathrm{NL}}$ given by
\begin{equation}
\label{lb}
    \tilde{w}_{\mathrm{NL}}(\q)\geq
\frac{\mathrm{I}(\q)-\mathrm{I}^{\mathrm{L}}}{\mathrm{I}^{\mathrm{NL}}-\mathrm{I}^{L}},
\end{equation}
with $\mathrm{I}^{\mathrm{NL}}$ being the maximum value of $\mathrm{I}$ obtainable with non-signaling correlations. Similarly to the trace distance the non-local content can also be computed via a linear program. However, differently from the trace distance, we see that any extremal non-local point of the NS-polytope will achieve the maximum according to this measure, independently of its actual distance to the set of local correlations.

Considering the CHSH scenario and following an identical approach as the one use to get \eqref{NLchsh} one can show that
$
\tilde{w}_{\mathrm{NL}}=2\max \left[0,\Pi(\mathrm{CHSH}) \right]
$. That is, in this particular case we have a simple relation $\tilde{w}_{\mathrm{NL}}(\q)=4\mathrm{NL}(\q)$. This picture, however, changes drastically already at the tripartite scenario. For each of 45 classes of extremal NS points (those violating maximally the Sliwa inequalities) it follows from \eqref{lb} that they achieve the maximum $\tilde{w}_{\mathrm{NL}}(\q)=1$. So, from the perspective of the non-local content all non-local extremal points of the NS polytope display the same amount of non-locality. That is not the case for $\mathrm{NL}(\q)$, as can be clearly seen in Fig. \ref{fig:sliwa} as the different extremal points have different values for it, illustrating the fact that they are at different distances from the local polytope.

\subsection{The relative entropy}
Another important non-locality quantifier is defined in terms of the relative entropy (also known as the Kullback-Leibler divergence) given by \cite{van2005,Acin2005}
\begin{equation}
\label{NLkl}
\mathrm{NL}_{\mathrm{KL}}(\q)=\frac{1}{\vert x \vert \vert y \vert }\min_{\p \in \mathcal{C}_{\mathrm{C}}} \mathrm{KL}(\q,\p),
\end{equation}
with
$
\mathrm{KL}(\q(x),\p(x))= \sum_{x} \q(x) \log{q(x)/p(x)}
$. Similarly to \eqref{NLtrace} we have also chosen an identically distributed distribution for the outcomes. This quantity is also a monotone for the resource theory of non-locality defined by the free operations discussed in Sec. \ref{sec:measure}.
The relevance of this measure comes from the fact of being a standard statistical tool quantifying the average amount of support against the possibility that an apparent non-local distribution can be generated by a local model \cite{van2005}.

\begin{figure}[!htb]
\centering
\includegraphics[scale=.26]{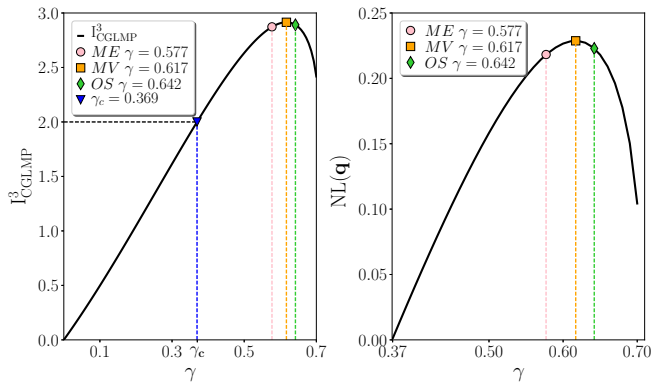}
\caption{In this figure we show the behaviour of the $\mathrm{I}^{3}_{\mathrm{CGLMP}}$ (left) and $\mathrm{NL}(\q)$ (right) inequality as function of $\gamma$ for the distribution $\q$ arising from measurements on the state $\ket{\phi} = \gamma\ket{00}+\sqrt{1-2\gamma^2}\ket{11}+\gamma\ket{22}$. From the critical value of $\gamma_c > 0.369$ we have that $\mathrm{I}^{3}_{\mathrm{CGLMP}}>0$ and $\mathrm{NL}(\q) >0$. In both figures, the circle, square and diamond corresponds to the maximal entangled state $(ME)$, the state that maximal violate (MV) the $\mathrm{I}^{3}_{\mathrm{CGLMP}}$ and the optimal state (OS) that provides the maximal  $\mathrm{KL}(\q,\p)$, respectively.  As we can see, the maximal $\mathrm{I}^{3}_{\mathrm{CGLMP}}$ violation is achieved by $\gamma=0.617$ and the maximal of $\mathrm{NL}(\q)$ is achieved for the same $\gamma$. This is opposed to the result found in \cite{Acin2005}, where the quantum distribution with maximum $\mathrm{KL}(\q,\p)$ do not violated $\mathrm{I}^{3}_{\mathrm{CGLMP}}$ maximally.}
\label{cglmpd3_gamma}
\end{figure}

In \cite{Acin2005} special attention has be given to the relation between $\mathrm{NL}_{\mathrm{KL}}(\q)$ and the violation of the CGLMP inequality \cite{Collins2002}. It has been shown that quantum correlations maximally violating the CGLMP inequality do not necessarily imply the optimum relative entropy. For that, the set of considered quantum distributions comes from a fixed set of measurements on a two-qutrit state of the form $\ket{\phi} = \gamma\ket{00}+\sqrt{1-2\gamma^2}\ket{11}+\gamma\ket{22}$. Inspired by that result we have analyzed what is the value of $\mathrm{NL}(\q)$ in the same setup. The results are shown in Fig. \ref{cglmpd3_gamma} with 3 distributions being specially relevant \cite{Acin2005}: i) the distribution obtained from the maximum violation of CGLPM with maximally entangled states ($\gamma=1\sqrt{3}$), ii) the maximum violation of the CGLMP (obtained with $\gamma=0.617$) and iii) the maximum value of $\mathrm{NL}_{\mathrm{KL}}(\q)$ (achieved with $\gamma=0.642$). As expected from the results in Sec. \ref{sec:applications} and differently for the results obtained for the relative entropy, the more we violate the CGLMP inequality the higher is the trace distance measure $\mathrm{NL}(\q)$.

Interestingly, via the Pinsker inequality \cite{Fedotov2003} we can relate the trace and relative entropy measure. That is, the trace distance provides a non-trivial bound to the relative entropy. Furthermore, the LP solutions to the minimization of the trace distance naturally give an ansatz solution providing an upper bound for $\mathrm{NL}_{\mathrm{KL}}(\q)$. These results are shown in Fig. \ref{cglmpd3} where we plot $\mathrm{NL}_{\mathrm{KL}}(\q)$ as a function of the CGLMP violation. In this plot we also show the curve obtained using the optimal non-signalling distribution $\q^{\ast}$ obtained from the LP used to minimize $\mathrm{NL}(\q)$ subjected to a specific value of the CGLMP inequality (see the caption in Fig. \ref{cglmpd3} for more details).


\begin{figure}[h!]
\centering
\includegraphics[scale=.3]{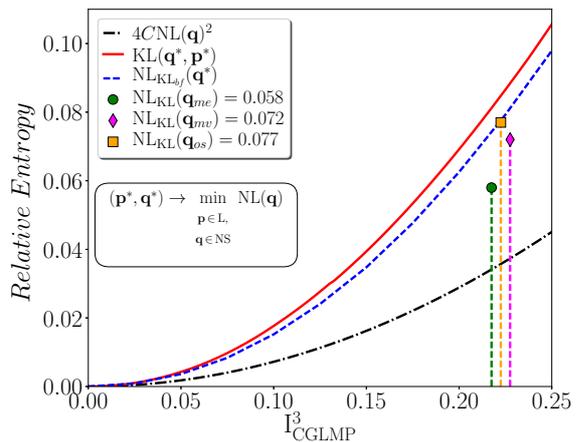}
\caption{Plot of $\min_{\p \in \mathcal{C}_{\mathrm{C}}} \mathrm{KL}(\q,\p)$ as a function of the CGLMP inequality for $d=3$. The black dashed curve represents the lower bound provided the Pinsker inequality, defined by $\mathrm{KL}(\q,\p)\geq 4C{\mathrm{NL}(\q)}^2$, where $C = \frac{1}{2}\log_2{e}$ and $\mathrm{NL}(\q)=\frac{1}{2}\| \q - \p \|_{\ell_1}$. The red curve provides an upper bound for $\min_{\p \in \mathcal{C}_{\mathrm{C}}} \mathrm{KL}(\q,\p)$ since it computes $KL(\q^{\ast},\p^{\ast})$, where $\q^{\ast}$ and $\p^{\ast}$ are the solutions provided by the LP minimization of $\mathrm{NL}(\q)$ subjected to NS constraints and a given value of the CGLMP inequality. The blue curve is obtained by a brute force minimization of $\min_{\p \in \mathcal{C}_{\mathrm{C}}} \mathrm{KL}(\q^{\ast},\p)$. The color points represent the 3 points found in \cite{Acin2005} and discussed in the main text. Interestingly, for the green (circle) and pink (diamond) points there are NS correlations (possibly post-quantum) with the same value of the CGLMP inequality but providing a higher value of $\min_{\p \in \mathcal{C}_{\mathrm{C}}} \mathrm{KL}(\q,\p)$.} 
\label{cglmpd3}
\end{figure}

\section{Discussion}
\label{sec:discussion}
Apart from its primal importance in the foundations of quantum physics, non-locality has also found several applications as a resource in quantum cryptography \cite{Acin2007}, randomness generation/amplification \cite{Pironio2010,Colbeck2012}, self-testing \cite{Mayers2004} and distributed computing \cite{Buhrman2010}. Within these both fundamental and applied contexts, quantifying non-locality is undoubtedly an important primitive.

Here we have introduced a natural quantifier for non-locality, a geometrical measure based on the trace distance between the probability distribution under test and the set of classical distributions compatible with LHV models. Nicely, our quantifier can be efficiently computed (as well as closed form analytical expressions can be derived) via a linear programming formulation. We have shown that it provides a proper quantifier since it is monotonous under a wide class of free operations defined in the resource theory of non-locality \cite{Vicente2014, GA17}. Finally we have applied our framework to a few scenarios of interest as compared our approach to other standard measures.

It would be interesting to use the trace distance measure to analyze previous results in the literature. For instance, the non-locality distillation and activation protocols proposed in \cite{Brunner2011} or the role of the trace distance in cryptographic protocols \cite{Acin2007}. From the statistical perspective one can investigate the relation of the trace distance to the relative entropy \cite{,van2005,Acin2005} and their interconnections to p-values \cite{Elkouss2015} and the statistical significance of Bell tests with limited data \cite{Zhang2011,Bierhorst2015}. Finally, we point out that even though here we have focused on Bell non-locality, the same measure can also be applied to quantify non-classical behavior in more general notions of non-locality \cite{Svetlichny1987,Gallego2012,CAC2017} as well as in quantum contextuality \cite{Abramsky2017,Amaral2017}. We hope our results might motivate further research in these directions.

\begin{acknowledgments}
We thank Daniel Cavalcanti for valuable comments on the manuscript and Thiago O. Maciel for interesting discussions. SGAB, BA and RC acknowledges financial support from the Brazilian ministries MEC and MCTIC. BA also acknowledges financial support from CNPq. 
\end{acknowledgments}

\appendix

\section{Detailed proof of the results in Sec. IV}
\label{sec:proofs}

\subsection{Relabeling operations}
\begin{theorem}
If $\mathcal{R}$ is a relabeling of inputs or outputs of $\q$, then $\mathrm{NL}\left(\mathcal{R}\left(\q\right)\right)=\mathrm{NL}\left(\q\right).$
\end{theorem}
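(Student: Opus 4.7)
The plan is to exploit two structural facts about relabelings: they act as permutations on the entries of the vectorized distribution $\q$, and they map the local polytope $\mathcal{C}_{\mathrm{C}}$ bijectively to itself. Both facts are elementary, and together they essentially force the trace distance from $\q$ to the local set to be preserved.

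First I would make explicit what a relabeling does at the level of the vector $\q$. A relabeling of inputs is determined by permutations $\pi_X, \pi_Y$ of the input alphabets, and a relabeling of outputs by (possibly input-dependent) permutations $\pi_A^{x}, \pi_B^{y}$; in both cases the action on $\q$ is just a permutation $\sigma_{\mathcal{R}}$ of its $|a||b||x||y|$ components. Next I would record the two key invariances:
\begin{itemize}
\item[(i)] The $\ell_1$ norm is invariant under any permutation of coordinates, so
\[
\| \mathcal{R}(\q) - \mathcal{R}(\p) \|_{\ell_1} = \| \q - \p \|_{\ell_1}
\]
for any pair of distributions $\q,\p$.
\item[(ii)] $\mathcal{R}$ maps $\mathcal{C}_{\mathrm{C}}$ into itself: if $\p$ admits a decomposition of the form \eqref{LHV} with response functions $p(a|x,\lambda)$ and $p(b|y,\lambda)$, then $\mathcal{R}(\p)$ admits the same decomposition with the relabeled response functions $p(\pi_A^{-x}(a)|\pi_X^{-1}(x),\lambda)$ and $p(\pi_B^{-y}(b)|\pi_Y^{-1}(y),\lambda)$, which are still valid conditional distributions. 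Moreover $\mathcal{R}$ is invertible, with $\mathcal{R}^{-1}$ again a relabeling, so the map $\mathcal{C}_{\mathrm{C}}\to\mathcal{C}_{\mathrm{C}}$ is a bijection.
\end{itemize}

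From here the proof is a one-line chain. For any $\p \in \mathcal{C}_{\mathrm{C}}$, write $\p' = \mathcal{R}(\p)$, which by (ii) ranges over all of $\mathcal{C}_{\mathrm{C}}$ as $\p$ does, and use (i) to get
\begin{equation*}
\min_{\p \in \mathcal{C}_{\mathrm{C}}} \| \mathcal{R}(\q) - \p' \|_{\ell_1}
= \min_{\p \in \mathcal{C}_{\mathrm{C}}} \| \mathcal{R}(\q) - \mathcal{R}(\p) \|_{\ell_1}
= \min_{\p \in \mathcal{C}_{\mathrm{C}}} \| \q - \p \|_{\ell_1}.
\end{equation*}
Finally, since $\mathcal{R}$ only permutes labels, it preserves the cardinalities $|x|$ and $|y|$ (input relabelings permute them internally; output relabelings leave them untouched), so the prefactor $1/(2|x||y|)$ in the definition of $\mathrm{NL}(\q)$ is unchanged. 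Dividing through gives $\mathrm{NL}(\mathcal{R}(\q)) = \mathrm{NL}(\q)$.

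I do not foresee a real obstacle: the only thing one must be careful about is verifying (ii) for the most general permitted class of relabelings (including input-dependent output permutations), but this is a direct rewriting of the LHV decomposition. If one wanted to be maximally economical, one could observe that equality, rather than mere monotonicity, comes for free because $\mathcal{R}^{-1}$ is itself a relabeling, so the symmetric argument yields the reverse inequality automatically.
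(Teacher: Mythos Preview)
Your proposal is correct and uses essentially the same ingredients as the paper's proof: permutation invariance of the $\ell_1$ norm and the fact that relabelings map $\mathcal{C}_{\mathrm{C}}$ bijectively to itself. The only cosmetic difference is that the paper first shows $\mathrm{NL}(\mathcal{R}(\q))\leq \mathrm{NL}(\q)$ by pushing the optimal $\p^*$ through $\mathcal{R}$ and then invokes $\mathcal{R}^{-1}$ for the reverse inequality, whereas you exploit bijectivity in a single change-of-variables step; you even anticipate the paper's two-inequality packaging in your closing remark.
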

\begin{proof}
Let $\p^* \in \mathcal{C}_C$ be such that 
\begin{equation} \mathrm{NL}\left(\p\right) = \frac{1}{2\left|x\right|\left|y\right|} \left\|\q -\p^* \right\|_{\ell_1}.
\label{eq:q_opt}
\end{equation}
The distributions $\mathcal{R}\left(\q\right)$ and $\mathcal{R}\left(\p^*\right)$ are obtained respectively from $\q$ and $\p^*$ by
a permutation of entries. Then, if follows that 
\begin{equation}
\left\|\mathcal{R}\left(\q\right) -\mathcal{R}\left(\p^*\right) \right\|_{\ell_1}=\left\|\q -\p^* \right\|_{\ell_1}.
\end{equation}
Since relabeling operations preserve the set of local distributions, $\mathcal{R}\left(\p^*\right) \in \mathcal{C}_C$ and hence
\begin{eqnarray}
\mathrm{NL}\left(\mathcal{R}\left(\q\right)\right)& =& \frac{1}{2|x||y|} \min_{\p \in \mathcal{C}_C} \left\|\mathcal{R}\left(\q\right) -\p \right\|_{\ell_1} \\
&\leq & \frac{1}{2|x||y|} \left\|\mathcal{R}\left(\q\right) -\mathcal{R}\left(\p^* \right) \right\|_{\ell_1}\\
&=& \frac{1}{2|x||y|} \left\|\q -\p^*  \right\|_{\ell_1}\\
&=&\mathrm{NL}\left(\q\right)
\end{eqnarray}
Relabeling operations are invertible, and hence there is a relabeling operation $\mathcal{R}^{-1}$ 
such that $\q = \mathcal{R}^{-1}\left[\mathcal{R}\left(\q\right)\right]$. A similar argument shows that
\begin{equation}\mathrm{NL}\left(\q\right) = \mathrm{NL}\left(\mathcal{R}^{-1}\left[\mathcal{R}\left(\q\right)\right]\right) \geq 
\mathrm{NL}\left(\mathcal{R}\left(\q\right)\right)\end{equation}
which proves the desired result.
\end{proof}

\subsection{Convex combinations}
\begin{theorem}
If $\q = \sum_k \pi_k \q_k$, where $\pi_k \leq 0$ and $\sum_k \pi_k=1$, then 
\begin{equation}
\mathrm{NL}\left(\q\right) \leq \sum_k \pi_k \mathrm{NL}\left(\q_k\right).
\end{equation}
\end{theorem}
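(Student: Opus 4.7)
The plan is to prove convexity of $\mathrm{NL}$ by exhibiting an explicit candidate for the optimal local distribution associated with $\q$, namely the corresponding convex combination of the optimizers for each $\q_k$, and then bounding the $\ell_1$ distance via the triangle inequality.

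First I would fix, for each $k$, a local distribution $\p_k^* \in \mathcal{C}_C$ that attains the minimum in the definition of $\mathrm{NL}(\q_k)$, so that
\begin{equation}
\mathrm{NL}\left(\q_k\right) = \frac{1}{2|x||y|} \left\|\q_k - \p_k^*\right\|_{\ell_1}.
\end{equation}
Since $\mathcal{C}_C$ is convex (it is the polytope of distributions admitting an LHV decomposition \eqref{LHV}, and a convex combination of LHV models is again an LHV model with averaged weights over $\lambda$), the candidate $\p := \sum_k \pi_k \p_k^*$ also lies in $\mathcal{C}_C$, which is the key structural fact being used.

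Next I would invoke the definition of $\mathrm{NL}(\q)$ as a minimum over $\mathcal{C}_C$ to bound it by the value at this particular $\p$:
\begin{equation}
\mathrm{NL}\left(\q\right) \leq \frac{1}{2|x||y|} \left\|\q - \p\right\|_{\ell_1} = \frac{1}{2|x||y|} \left\|\sum_k \pi_k \left(\q_k - \p_k^*\right)\right\|_{\ell_1}.
\end{equation}
The triangle inequality for the $\ell_1$ norm, together with $\pi_k \geq 0$ (so that $|\pi_k| = \pi_k$), gives
\begin{equation}
\left\|\sum_k \pi_k \left(\q_k - \p_k^*\right)\right\|_{\ell_1} \leq \sum_k \pi_k \left\|\q_k - \p_k^*\right\|_{\ell_1},
\end{equation}
and dividing by $2|x||y|$ and substituting the definitions of $\mathrm{NL}(\q_k)$ yields the desired bound $\mathrm{NL}(\q) \leq \sum_k \pi_k \mathrm{NL}(\q_k)$.

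There is no real obstacle here: the proof is essentially a packaging of two elementary facts, convexity of $\mathcal{C}_C$ and convexity of the $\ell_1$ norm. The only point worth flagging is that the statement's hypothesis "$\pi_k \leq 0$" is clearly a typo for $\pi_k \geq 0$ (otherwise $\q$ need not be a probability distribution at all, and the triangle inequality step would fail); under that correction the argument is straightforward. Note also that this result strictly generalizes the earlier observation about mixing $\q$ with a single local $\p$, which is recovered as the special case where one of the $\q_k$ lies in $\mathcal{C}_C$ and hence contributes zero to the bound.
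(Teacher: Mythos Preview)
Your proof is correct and follows essentially the same approach as the paper: pick minimizers $\p_k^*$ for each $\q_k$, form their convex combination $\p^* = \sum_k \pi_k \p_k^* \in \mathcal{C}_C$, bound $\mathrm{NL}(\q)$ by the distance to this candidate, and apply the triangle inequality for the $\ell_1$ norm. Your remark that the hypothesis ``$\pi_k \leq 0$'' is a typo for $\pi_k \geq 0$ is also correct.
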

\begin{proof}
Let $\p_k^* \in \mathcal{C}_C$ be such that
\begin{equation}
\mathrm{NL}\left(\q_k\right) = \frac{1}{2|x||y|}\left\|\q_k -\p_k^* \right\|_{\ell_1}
\end{equation}
and let $\p^*= \sum_k \pi_k\p_k^* \in \mathcal{C}_C$. Then, we have
\begin{eqnarray}
\mathrm{NL}\left( \q\right) & = & \frac{1}{2\left|x\right|\left|y\right|} \min_{\p \in \mathcal{C}_C} \left\|\q-\p \right\|_{\ell_1} \\
&\leq & \frac{1}{2|x||y|}\left\|\q -\p^* \right\|_{\ell_1}\\
&=& \frac{1}{2\left|x\right|\left|y\right|} \left\|\sum_k \pi_k \q_k -\sum_k \pi_k \p_k^* \right\|_{\ell_1}\\
&\leq & \frac{1}{2|x||y|} \sum_k \pi_k \left\| \q_k - \p_k^* \right\|_{\ell_1}\\
&=& \sum_k \pi_k \mathrm{NL}\left(\q_k\right)
\end{eqnarray}
\end{proof}

\begin{corollary}
If $\p \in \mathcal{C}_C$ and $\pi \in [0,1]$,
\begin{equation}
\mathrm{NL}\left(\pi \q + \left(1-\pi\right) \p\right) \leq \pi \mathrm{NL}\left( \q \right).
\end{equation}
\end{corollary}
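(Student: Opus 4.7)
The plan is to derive the Corollary as an immediate consequence of the preceding convexity theorem, combined with the basic observation that local distributions have zero non-locality. First, I would apply the theorem to the two-term convex combination $\q' = \pi \q + (1-\pi) \p$, identifying $\q_1 = \q$ with weight $\pi_1 = \pi$ and $\q_2 = \p$ with weight $\pi_2 = 1-\pi$, which clearly satisfy $\pi_1, \pi_2 \geq 0$ and $\pi_1 + \pi_2 = 1$. The theorem then yields
\begin{equation}
\mathrm{NL}\left(\pi \q + (1-\pi) \p\right) \leq \pi\, \mathrm{NL}(\q) + (1-\pi)\, \mathrm{NL}(\p).
\end{equation}

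Next, I would observe that $\mathrm{NL}(\p) = 0$ whenever $\p \in \mathcal{C}_C$. This holds directly from the definition in Eq.~\eqref{NLtrace}: since $\mathrm{NL}(\p)$ is the minimum $\ell_1$ distance (up to the normalization constant) between $\p$ and elements of $\mathcal{C}_C$, and since $\p$ itself is a feasible point of this minimization, one can take $\p$ as its own minimizer, giving distance zero. Substituting $\mathrm{NL}(\p) = 0$ into the inequality above yields exactly the claimed bound.

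There is essentially no obstacle here: the Corollary is a one-line specialization of the Theorem. The only thing to be careful about is making sure $\p$ is admissible as one of the $\q_k$ in the Theorem (it is, since the Theorem places no restriction on the $\q_k$ other than being probability distributions in the relevant space) and that $\mathrm{NL}$ vanishes on $\mathcal{C}_C$, which is immediate from the definition.
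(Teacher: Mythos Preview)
Your proposal is correct and matches the paper's intended derivation: the corollary is stated immediately after the convexity theorem with no separate proof, so it is meant to follow exactly as you describe --- specialize the theorem to the two-term combination and use $\mathrm{NL}(\p)=0$ for $\p\in\mathcal{C}_C$.
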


\subsection{Post-processing operations}

\begin{theorem}
If $\mathcal{O}$ is an post-processing operation, defined as in Eq. \eqref{eq:post}, then $\mathrm{NL}\left(\mathcal{O}\left(\q\right)\right)\leq \mathrm{NL}\left(\q\right).$
\end{theorem}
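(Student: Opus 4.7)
The plan is to run the standard argument that total variation distance contracts under any stochastic map, specializing it to the channel induced by a post-processing. First I would fix a distribution $\p^* \in \mathcal{C}_{\mathrm{C}}$ that attains the minimum in the definition of $\mathrm{NL}(\q)$, so that $\mathrm{NL}(\q) = \frac{1}{2|x||y|} \|\q - \p^*\|_{\ell_1}$. Because post-processing preserves locality (the property cited immediately before the theorem statement), $\mathcal{O}(\p^*) \in \mathcal{C}_{\mathrm{C}}$ and is therefore a valid competitor in the minimization defining $\mathrm{NL}(\mathcal{O}(\q))$. Crucially, $\mathcal{O}$ leaves the set of inputs unchanged, so the normalization prefactor $1/(2|x||y|)$ is the same on both sides of the desired inequality.

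The core step is to show $\|\mathcal{O}(\q) - \mathcal{O}(\p^*)\|_{\ell_1} \leq \|\q - \p^*\|_{\ell_1}$. Using Eq.~\eqref{eq:post} and the triangle inequality pointwise in $(\alpha,\beta,x,y)$, and noting $O^L \geq 0$ so the absolute values pass through the sum, one obtains
\begin{equation}
|\mathcal{O}(\q)(\alpha,\beta|x,y) - \mathcal{O}(\p^*)(\alpha,\beta|x,y)| \leq \sum_{a,b} O^L(\alpha,\beta|a,b,x,y) |q(a,b|x,y) - p^*(a,b|x,y)|.
\end{equation}
Summing over $\alpha,\beta$ and interchanging the order of summation leaves the weight $\sum_{\alpha,\beta} O^L(\alpha,\beta|a,b,x,y)$, which equals $1$ because $O_A^L(\alpha|a,x)$ and $O_B^L(\beta|b,y)$ are normalized response functions and $O^L$ is their convex combination. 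Summing the resulting inequality over $x,y$ then delivers the announced $\ell_1$ contraction.

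Dividing by $2|x||y|$ and using the definition of $\mathrm{NL}(\mathcal{O}(\q))$ as a minimum over $\mathcal{C}_{\mathrm{C}}$ yields $\mathrm{NL}(\mathcal{O}(\q)) \leq \frac{1}{2|x||y|}\|\mathcal{O}(\q) - \mathcal{O}(\p^*)\|_{\ell_1} \leq \mathrm{NL}(\q)$, as desired. The one step that deserves attention is the normalization of $O^L$ on the output indices $(\alpha,\beta)$: this is precisely the property that makes $\mathcal{O}$ a genuine stochastic channel between probability vectors and hence the reason the $\ell_1$ contraction applies. Everything else is a direct consequence of the triangle inequality combined with the facts that locality is preserved by $\mathcal{O}$ and that the input alphabet is untouched.
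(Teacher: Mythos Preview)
Your proposal is correct and follows essentially the same argument as the paper's proof: choose the optimal local $\p^*$, use that $\mathcal{O}(\p^*)\in\mathcal{C}_{\mathrm{C}}$, and obtain the $\ell_1$ contraction via the triangle inequality together with the normalization $\sum_{\alpha,\beta} O^L(\alpha,\beta|a,b,x,y)=1$. The paper presents exactly this chain of inequalities, so there is no substantive difference in method or emphasis.
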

\begin{proof}
Let $\p^* \in \mathcal{C}_C$ be the distribution satisfying Equation \ref{eq:q_opt}. Then,
\begin{widetext}
\begin{eqnarray}
\mathrm{NL}\left(\mathcal{O}\left(\q \right)\right) &=& \frac{1}{2|x||y|} \min_{\p \in \mathcal{C}_C } \left\|\mathcal{O}\left(\q\right) -\p \right\|_{\ell_1} \\
&\leq & \frac{1}{2|x||y|} \left\|\mathcal{O}\left(\q\right) -\mathcal{O}\left(\p^* \right) \right\|_{\ell_1} \\
& =& \frac{1}{2|x||y|} \sum_{\alpha,\beta,x,y} \left| \sum_{a,b} O^L\left(\alpha, \beta \left|a,b,x,y\right.\right) \left(q\left(a,b\left| x,y\right.\right) -
p^*\left(a,b\left| x,y\right.\right)\right)\right|\\
& \leq & \frac{1}{2|x||y|} \sum_{\alpha,\beta,a,b,x,y}   O^L\left(\alpha, \beta \left| a,b,x,y \right.\right) \left|q\left(a,b\left| x,y\right.\right) - p^*\left(a,b\left| x,y\right.\right)\right|\\
&=&\frac{1}{2|x||y|} \sum_{a,b,x,y} \left|q\left(a,b\left| x,y\right.\right) - p^*\left(a,b\left| x,y\right.\right)\right|\\
&=& \frac{1}{2|x||y|} \left\|\q -\p^* \right\|_{\ell_1}= \mathrm{NL}\left(\q\right).
\end{eqnarray}
\end{widetext}
\end{proof}

\subsection{Pre-processing operations}

In Ref. \cite{Vicente2014} the author defines the uncorrelated input enlarging operation, which consists in one or more parts adding an uncorrelated measurement locally. Without loss of generality we can assume that this measurement is deterministic, since convexity of the $\ell_1$ norm implies that $\mathrm{NL}$ is also a monotone under the addition of a non-deterministic uncorrelated measurement.

Given a correlation $\q$, suppose part $A$ adds one  uncorrelated measurement at her side. Denoting $|x|=m_A$, we define 
\begin{equation}
q_f(a,b|x,y)= \begin{cases}
q(a,b|x,y), & \mbox{if} \ x\leq m_A\\
q(b|y) \delta_{a,a'}, & \mbox{if} \ x=m_A+1 
\end{cases}
\label{eq:input_enl}
\end{equation}
where $a'$ is the deterministic output of the additional measurement $m_A+1$.

\begin{theorem}
If $\q_f$ is obtained from $\q$ by the  input enlarging operation define in Eq, \eqref{eq:input_enl}, then 
\begin{equation}
\mathrm{NL}\left(\q_f\right) \leq \mathrm{NL} \left(\q\right).
\end{equation}
\end{theorem}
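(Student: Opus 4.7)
The plan is to exhibit an explicit local approximant $\p_f^*$ of $\q_f$ that is built from an optimal local approximant $\p^*$ of $\q$ by mirroring the construction of $\q_f$ itself, and then to use the no-signaling property to absorb the extra contribution from the enlarged input into the additional factor $1/(m_A+1)$ that appears in the normalization of $\mathrm{NL}(\q_f)$ when $|x|$ grows by one.

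First I would pick $\p^* \in \mathcal{C}_C$ attaining $\mathrm{NL}(\q)=\frac{1}{2m_A|y|}\|\q-\p^*\|_{\ell_1}$ with $m_A=|x|$, and define $\p_f^*$ by setting $p_f^*(a,b|x,y)=p^*(a,b|x,y)$ for $x\leq m_A$ and $p_f^*(a,b|m_A+1,y)=p^*(b|y)\,\delta_{a,a'}$. To see that $\p_f^* \in \mathcal{C}_C$, I would take any LHV decomposition $p^*(a,b|x,y)=\sum_\lambda p(\lambda)p_A(a|x,\lambda)p_B(b|y,\lambda)$ and extend Alice's response function at input $m_A+1$ to be deterministically $a'$ for every $\lambda$, leaving $p(\lambda)$ and Bob's response functions untouched; the resulting model produces exactly $\p_f^*$.

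Next I would split $\|\q_f-\p_f^*\|_{\ell_1}$ into the sum over $x \leq m_A$, which equals $\|\q-\p^*\|_{\ell_1}$, and the sum over the new input $x=m_A+1$, which collapses through the $\delta_{a,a'}$ factor to $\sum_{b,y}|q(b|y)-p^*(b|y)|$. The key step uses that $\q$ is no-signaling (an assumption built into the resource theory) and that $\p^*$ is automatically no-signaling: for any $x\in\{1,\dots,m_A\}$ one has $q(b|y)-p^*(b|y)=\sum_a[q(a,b|x,y)-p^*(a,b|x,y)]$, so the triangle inequality gives $|q(b|y)-p^*(b|y)| \leq \sum_a|q(a,b|x,y)-p^*(a,b|x,y)|$, and averaging over $x\in\{1,\dots,m_A\}$ yields $\sum_{b,y}|q(b|y)-p^*(b|y)| \leq \frac{1}{m_A}\|\q-\p^*\|_{\ell_1}$.

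Combining the two contributions gives $\|\q_f-\p_f^*\|_{\ell_1} \leq \frac{m_A+1}{m_A}\|\q-\p^*\|_{\ell_1}$, and dividing by the new normalization $2(m_A+1)|y|$ produces $\mathrm{NL}(\q_f) \leq \frac{1}{2m_A|y|}\|\q-\p^*\|_{\ell_1} = \mathrm{NL}(\q)$. The main obstacle I expect is the choice of extension on the added input: the naive choice $p_f^*(a,b|m_A+1,y)=q(b|y)\delta_{a,a'}$ would make the extra contribution vanish but is in general not local, whereas replacing $q(b|y)$ by $p^*(b|y)$ restores locality at the cost of a $1/m_A$ penalty, which is precisely what the shift in the prefactor absorbs, making the inequality tight enough for monotonicity.
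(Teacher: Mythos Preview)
Your proof is correct and follows essentially the same approach as the paper: build $\p_f^*$ from $\p^*$ by appending a deterministic outcome on the new input, then use no-signaling plus the triangle inequality to bound the contribution of the added input by that of the existing ones. The only cosmetic difference is that the paper bounds the extra term by the \emph{minimum} over $x\leq m_A$ of $\sum_{a,b}|q-p^*|$ and then appeals to the fact that adding a below-minimum term cannot raise an average, whereas you go straight to the average bound $\tfrac{1}{m_A}\|\q-\p^*\|_{\ell_1}$; your version is in fact more explicit about verifying $\p_f^*\in\mathcal{C}_C$ and about the final arithmetic with the normalization.
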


\begin{proof}
Let $\p^* \in \mathcal{C}_C$ be the distribution satisfying Equation \ref{eq:q_opt}. For any pair of  inputs $m_A+1, y$
 we have that 
 \begin{widetext}
 \begin{eqnarray}
 \sum_{a,b} \left|q_f\left(a,b|m_A+1,y \right)  - p^*_f\left(a,b|m_A+1,y \right)\right| &=& \\
 & &\sum_{b} \left|q\left(b|y \right)  - p^*\left(b|y \right)\right| \\
& = & \sum_{b} \left|\sum_a q\left(a, b|x, y \right)  - p^*\left(a,b|x, y \right)\right| \ \forall x\leq m_A \\
& \leq  & \sum_{a, b} \left| q\left(a, b|x, y \right)  - p^*\left(a,b|x, y \right)\right| \ \forall x\leq m_A \\
 \end{eqnarray}

 Hence,
 \begin{equation}
 \sum_{a,b} \left|q_f\left(a,b|m_A+1,y \right)  - p^*_f\left(a,b|m_A+1,y \right)\right| \leq \min_{x\leq m_A}  \sum_{a, b} \left|\sum_a q\left(a, b|x, y \right)  - p^*\left(a,b|x, y \right)\right|.
 \end{equation}
  \end{widetext}
  This implies that $\q_f$ is obtained from $\q$ by adding  pairs of inputs  for which the distance of the  probability distributions 
  $q_f$ and $p^*_f$ decreases. Since $\mathrm{NL}$ is defined by taking the average over the pairs of inputs, this implies the
  desired result.
\end{proof}

We now proceed to prove monotonicity under the pre-processing operations defined in Eq. \eqref{eq:def_pre}.

\begin{theorem}
If $\mathcal{I}$ is an pre-processing  operation such that $\left|\chi\right|=|x|$, $\left|\psi\right| =|y|$ and $\sum_{\chi, \psi} I^L\left(x,y\left|\chi , \psi\right. \right) \leq 1$, then 
\begin{equation}
\mathrm{NL}\left(\mathcal{I}\left(\q\right)\right)\leq \mathrm{NL}\left(\q\right).
\end{equation}
\end{theorem}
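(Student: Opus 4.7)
The plan is to mimic the proof strategy used for post-processing: take $\p^*\in\mathcal{C}_C$ achieving the minimum in the definition of $\mathrm{NL}(\q)$, use that pre-processing preserves locality (so $\mathcal{I}(\p^*)\in\mathcal{C}_C$ serves as a valid candidate for the minimization defining $\mathrm{NL}(\mathcal{I}(\q))$), and then bound $\|\mathcal{I}(\q)-\mathcal{I}(\p^*)\|_{\ell_1}$ in terms of $\|\q-\p^*\|_{\ell_1}$ using the triangle inequality together with the two structural restrictions imposed on $\mathcal{I}$.

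First I would write out
\begin{equation}
\mathcal{I}(\q)(a,b|\chi,\psi)-\mathcal{I}(\p^*)(a,b|\chi,\psi)=\sum_{x,y}\bigl(q(a,b|x,y)-p^*(a,b|x,y)\bigr)\,I^L(x,y|\chi,\psi),
\end{equation}
take the absolute value, apply the triangle inequality, and sum over $a,b,\chi,\psi$. Swapping the order of summation gives
\begin{equation}
\|\mathcal{I}(\q)-\mathcal{I}(\p^*)\|_{\ell_1}\leq \sum_{a,b,x,y}|q(a,b|x,y)-p^*(a,b|x,y)|\sum_{\chi,\psi}I^L(x,y|\chi,\psi),
\end{equation}
at which point the assumption $\sum_{\chi,\psi}I^L(x,y|\chi,\psi)\leq 1$ collapses the inner sum to a factor bounded by $1$, yielding $\|\mathcal{I}(\q)-\mathcal{I}(\p^*)\|_{\ell_1}\leq\|\q-\p^*\|_{\ell_1}$. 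Combining this with the minimization inequality $\mathrm{NL}(\mathcal{I}(\q))\leq\frac{1}{2|\chi||\psi|}\|\mathcal{I}(\q)-\mathcal{I}(\p^*)\|_{\ell_1}$ and using the hypothesis $|\chi|=|x|$, $|\psi|=|y|$ to match the normalization constants, I recover $\mathrm{NL}(\mathcal{I}(\q))\leq\mathrm{NL}(\q)$.

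There is no substantial obstacle: the argument is essentially a transpose of the post-processing proof, with the role of $O^L$ played by $I^L$. The only subtlety worth emphasizing is \emph{why} both restrictions in the theorem's hypotheses are needed: the bound $\sum_{\chi,\psi}I^L(x,y|\chi,\psi)\leq 1$ is precisely what rules out input-enlarging (which could duplicate $(x,y)$ across several $(\chi,\psi)$ and inflate the $\ell_1$ norm), while the equality of cardinalities $|\chi|=|x|$, $|\psi|=|y|$ is exactly what ensures that the prefactor $\frac{1}{2|\chi||\psi|}$ in $\mathrm{NL}(\mathcal{I}(\q))$ matches $\frac{1}{2|x||y|}$ in $\mathrm{NL}(\q)$. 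Without either restriction one could artificially modify $\mathrm{NL}$ by inflating the scenario, so the theorem is tight in that respect. Finally, the use of $\mathcal{I}(\p^*)\in\mathcal{C}_C$ relies on the cited result from \cite{LVN14,Amaral2017} that pre-processing preserves locality, which can be quoted directly without reproof.
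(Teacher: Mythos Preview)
Your proposal is correct and follows essentially the same approach as the paper: pick the optimal local $\p^*$, use that $\mathcal{I}(\p^*)\in\mathcal{C}_C$ as a candidate in the minimization, expand the $\ell_1$ norm, apply the triangle inequality, swap the order of summation, and invoke the two hypotheses $\sum_{\chi,\psi}I^L(x,y|\chi,\psi)\leq 1$ and $|\chi|=|x|$, $|\psi|=|y|$ to conclude. Your commentary on why both restrictions are needed is also accurate and matches the paper's intent.
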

\begin{proof}
Let $\p^* \in \mathcal{C}_C$ be the distribution satisfying Equation \ref{eq:q_opt}. Then,
\begin{widetext}
\begin{eqnarray}
\mathrm{NL}\left(\mathcal{I}\left(\q \right)\right) &=& \frac{1}{2\left|\chi\right|\left|\psi\right|} \min_{\p \in L} \left\|\mathcal{I}\left(\q\right) -\p \right\|_{\ell_1} \\
&\leq & \frac{1}{2\left|\chi\right|\left|\psi\right|} \left\|\mathcal{I}\left(\q\right) -\mathcal{I}\left(\p^*\right) \right\|_{\ell_1} \\
& =& \frac{1}{2\left|\chi\right|\left|\psi\right|} \sum_{a,b,\chi,\psi} \left| \sum_{x,y}  \left(q\left(a,b\left| x,y\right.\right) -
p^*\left(a,b\left| x,y\right.\right) \right) I^L\left(x,y\left|\chi, \psi \right.\right)\right|\\
& \leq &\frac{1}{2\left|\chi\right|\left|\psi\right|} \sum_{a,b,x,y}  \sum_{\chi, \psi} I^L\left(x,y \left| \chi, \psi \right.\right) \left|q\left(a,b\left| x,y\right.\right) - p^*\left(a,b\left| x,y\right.\right)\right|\\
&=&\frac{1}{2\left|x\right|\left|y\right|} \sum_{a,b,x,y} \left|q\left(a,b\left| x,y\right.\right) - p^*\left(a,b\left| x,y\right.\right)\right|\\
&=& \frac{1}{2\left|x\right|\left|y\right|} \left\|\q -\p^* \right\|_{\ell_1}= \mathrm{NL}\left(\q\right).
\end{eqnarray}
\end{widetext}
\end{proof}

\bibliography{Norm1bib}

\end{document}